  \newcommand{\myskip}{\vskip 0.5cm \noindent}
 \newif\iffinal
  \def\timestamp{%
      \the\year--%
      \ifnum\month<10 0\fi\the\month--%
      \ifnum\day<10 0\fi\the\day\ {%
        \count9=\time \divide\count9 by 60 %
        \ifnum\count9<10 0\fi\the\count9%
        \count7=\time \multiply\count9 by60 \advance\count7 by-\count9%
        :%
        \ifnum\count7<10 0\fi\the\count7%
      }
  }
  \newcommand{\pr}{\textsf{root}}
  \newcommand{\nth}{\mbox{${}^{\textsl{\scriptsize th}}$}}
  \newcommand{\nd}{\mbox{${}^{\textsl{\scriptsize nd}}$}}
  \newcommand{\rd}{\mbox{${}^{\textsl{\scriptsize rd}}$}}
  \newcommand{\dotdot}{\ldotp\ldotp}
  \newcommand{\squares}{\textsf{squares}}
  \newcommand{\cubes}{\textsf{cubes}}
\begin{document}
  \date{}
  \author{
    Marcin Kubica\inst{1}
    \and
    Jakub Radoszewski\inst{1}
    \and
    Wojciech Rytter\inst{1,2}
    \and \vskip 0.2cm
    Tomasz Wale\'n\inst{1}
  }

  \institute{
    Department of Mathematics, Computer Science and Mechanics, \\
    University of Warsaw, Warsaw, Poland\\
    \email{\{kubica,jrad,rytter,walen\}@mimuw.edu.pl}
    \and
    Faculty of Mathematics and Informatics,\\
    Copernicus University, Toru\'n, Poland
  }

  \title{On\ the\ maximal\ number\ of\ cubic \\ subwords\   in\ a\ string\thanks{
      Supported by grant N206 004 32/0806
      of the Polish Ministry of Science and Higher
      Education.
    }
}
  \maketitle

  \begin{abstract}
    We investigate the problem of the maximum number of cubic subwords
    (of the form $www$) in a given word.
    We also consider square subwords (of the form $ww$).
    The problem of the maximum number of squares in a word is not well understood.
    Several new results related to this problem are produced in the paper.
    We consider two simple problems related to the maximum number of subwords which
    are squares or which are highly repetitive;
    then we provide a nontrivial estimation for the number of cubes.
    We show  that the maximum number of   squares $xx$ such that $x$ is not a primitive word
    (nonprimitive squares) in a word of length $n$ is exactly
    $\left\lfloor \frac{n}{2}\right\rfloor - 1$, and the maximum number of subwords
    of the form $x^k$, for $k\ge 3$, is exactly $n-2$.
    In particular, the maximum number of cubes in a word is not greater than $n-2$ either.
    Using very technical properties of occurrences of cubes, we  improve this bound significantly.
    We show that the maximum number of cubes in a word of length $n$ is between
    $\frac{1}{2}n$ and $\frac{4}{5}n$ \footnote{
    In particular, we improve the lower bound from the conference version of the
    paper \cite{Iwoca}.}.
  \end{abstract}

  \section{Introduction}
    A repetition is a word composed (as a concatenation) of several copies of another word.
    The exponent is the number of copies.
    We are interested in natural exponents higher than 2.
    In \cite{Szilard} the authors considered also exponents which are not integer.

    In this paper we investigate the bounds for the maximum number of highly
    repetitive subwords in a word of length $n$.
    A word is highly repetitive iff it is of the form $x^k$ for some integer $k$ greater than 2.
    In particular, cubes $w^3$ and squares $x^2$ with nonprimitive $x$ are highly repetitive.

    The subject of computing maximum number of squares and repetitions in words
    is one of the fundamental topics in combinatorics on words
    \cite{Karhumaki,Loth05} initiated by A.~Thue \cite{Thue},
    as well as it is important in other areas:
    lossless compression, word representation, computational biology etc.

    The behaviour of the function $\squares(n)$ of maximum number of squares in a word of length $n$
    is not well understood, though the subject of squares was studied by many authors, see
    \cite{DBLP:journals/algorithmica/CrochemoreR95,Jewels,DBLP:journals/tcs/IliopoulosMS97,marcinpiatk}.
    The best known results related to the value of $\squares(n)$ are, see
    \cite{fraenkel-simpson,ilie:2n,ilie:2n-logn}:
    $$n-o(n) \le \squares(n) \le 2n-O(\log n)\ .$$
    In this paper we concentrate on larger powers of words and show that in this
    case we can have much better estimations.
    Let $\cubes(n)$ denote the maximum number of cubes in a word of length $n$.
    We show that:
    $$\frac{1}{2}n  \le \cubes(n)\ \le \frac{4}{5}n\ .$$

    There are known efficient algorithms for the computation of integer powers in words,
    see~\cite{DBLP:journals/tcs/ApostolicoP83,DBLP:journals/ipl/Crochemore81,%
      damanik,DBLP:journals/dam/Main89,DBLP:journals/jal/MainL84}.

    The powers in words are related to maximal repetitions, also called {\em runs}.
    It is surprising that the bounds for the number of runs are much tighter than for squares,
    this is due to the work of many people
    \cite{baturo,DBLP:journals/jcss/CrochemoreI08,DBLP:conf/cpm/CrochemoreIT08,%
      DBLP:conf/lata/Giraud08,DBLP:conf/focs/KolpakovK99,DBLP:conf/fct/KolpakovK99,%
      DBLP:journals/tcs/PuglisiSS08,DBLP:conf/stacs/Rytter06,DBLP:journals/iandc/Rytter07}.

    Our main result is a new estimation of the number of cubic subwords.
    We use a new interesting technique in the analysis: the proof of the upper bound is
    reduced to the proof of an invariant of some abstract algorithm (in our invariant lemma).
    There is still some gap between upper and lower bound but it is much smaller than
    the corresponding gap for the number of squares.

    \begin{figure}
    \centering
    \includegraphics[width=.8\textwidth]{cubes30.1}
    \caption{Example of a word with 11 distinct cubes. 
        This is a word of length 30 with maximal number of cubes among binary
        words of the same length.
    }
    \end{figure}

  \section{Periodicities in strings}
    We consider \emph{words} over a finite alphabet $A$, $u\in A^*$;
    by $\varepsilon$ we denote an empty word.
    The positions in a word $u$ are numbered from $1$ to $|u|$.
    For $u=u_1\ldots u_k$, by $u[i\dotdot j]$ we denote a \textit{subword}
    of $u$ equal to $u_i\ldots u_j$; in particular, $u[i]=u[i\dotdot i]$.

    We say that a positive integer $p$ is a \emph{period} of a word
    $u=u_1\ldots u_k$ if $u_i=u_{i+p}$ holds for $1\le i\le k-p$.
    If $w^k=u$ ($k$ is a nonnegative integer) then we say that $u$
    is the $k\nth$ power of the word $w$.

    The \emph{primitive root} of a word $u$, denoted $\pr(u)$, is
    the shortest word $w$, such that $w^k=u$ for some positive $k$.
    We call a word $u$ \emph{primitive} if $\pr(u)=u$, otherwise
    it is called \emph{nonprimitive}.
    It can be proved that the primitive root of a word $u$ is the only
    primitive word $w$, such that $w^k=u$ for some positive $k$.

    A \emph{square} is the $2\nd$ power of some word, and
    an \emph{np-square} (a nonprimitive square) is a square of a word
    that is \textbf{not} primitive.
    A \emph{cube} is a $3\rd$ power of some word.

    In this paper we focus on the last occurrences of subwords.
    Hence, whenever we say that word $u$ \emph{occurs at position $i$} of the word $v$
    we mean its \textbf{last} occurrence, that is
    $v[i\dotdot i+|u|-1]=u$ and $v[j\dotdot j+|u|-1]\ne u$ for $j>i$.
    The following lemma is used extensively throughout the article.

    \begin{lemma}[Periodicity lemma \cite{FW:peridicity-lemma,Loth05}]
      \label{l:periodicity}
      If a word of length $n$ has two periods $p$ and $q$,
      such that $p+q \le n+\gcd(p,q)$, then $\gcd(p,q)$ is also a period of the word.
    \end{lemma}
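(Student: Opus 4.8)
The plan is to prove the statement constructively: set $d=\gcd(p,q)$ and show directly that $d$ is a period by establishing $w_i=w_{i+d}$ for every admissible $i$. The organizing idea is that any two positions differing by a multiple of $p$ or of $q$, both lying in $\{1,\ldots,n\}$, must carry the same letter. So I would introduce the graph $G$ on the position set $\{1,\ldots,n\}$ whose edges join $i$ with $i+p$ and $i$ with $i+q$ whenever both endpoints are legal positions; every edge connects equal letters. It then suffices to show that any two positions lying in the same residue class modulo $d$ belong to one connected component of $G$, since this forces $w$ to be constant on each class, i.e.\ to have period $d$.

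First I would reduce to the coprime case. Because $d\mid p$ and $d\mid q$, both kinds of edges preserve residues modulo $d$, so $G$ decomposes into $d$ independent subgraphs, one per residue class. Writing $p=dp'$ and $q=dq'$ with $\gcd(p',q')=1$, and relabelling the positions of a fixed class as $0,1,\ldots,M-1$, the class subgraph becomes the ``jump graph'' $H$ with edges $\{j,j+p'\}$ and $\{j,j+q'\}$. The hypothesis $p+q\le n+d$ rewrites, after dividing by $d$ and checking the smallest class (whose size is $\lfloor n/d\rfloor$), as $M\ge p'+q'-1$. Thus everything reduces to a self-contained statement: if $\gcd(p',q')=1$ and $M\ge p'+q'-1$, then $H$ is connected.

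To prove connectivity I would show that consecutive labels $k$ and $k+1$ lie in one component, which suffices. Since $\gcd(p',q')=1$, choose $a,b\ge 0$ with $ap'-bq'=1$ and try to walk from $k$ to $k+1$ using $a$ up-steps of size $p'$ and $b$ down-steps of size $q'$, ordered greedily (add $p'$ whenever there is room, otherwise subtract $q'$). This has a Euclidean flavour: for an interior vertex one of the two short detours $j\to j+p'\to j+p'-q'$ or $j\to j-q'\to j-q'+p'$ stays inside $\{0,\ldots,M-1\}$ and simulates a jump of size $p'-q'$, passing to the pair $(p'-q',q')$ with the inequality preserved. The hard part, and the exact place where the additive term $\gcd(p,q)$ in the hypothesis is essential and tight, is the handful of positions within distance $q'$ of the upper end, where both short detours overflow the interval; there one genuinely needs the longer alternating walk rather than a single intermediate vertex. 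I expect the main technical obstacle to be verifying that this greedy walk never leaves $\{0,\ldots,M-1\}$, and it is precisely the bound $M\ge p'+q'-1$ that guarantees this and that fails for a word one symbol shorter.
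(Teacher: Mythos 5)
First, a remark on the comparison you asked for: the paper does \emph{not} prove this lemma --- it is the classical Fine--Wilf theorem, quoted with citations and used as a black box --- so your attempt must stand on its own as a proof of a known theorem rather than be measured against a proof in the paper. Your reduction is the standard combinatorial route and is carried out correctly: the graph $G$ joining positions that differ by $p$ or by $q$, the decomposition into residue classes modulo $d=\gcd(p,q)$, the relabelling of a class as a jump graph $H$ on $\{0,\dots,M-1\}$ with coprime jumps $p'=p/d$, $q'=q/d$, and the arithmetic showing that $p+q\le n+d$ gives $M\ge\lfloor n/d\rfloor\ge p'+q'-1$ are all fine, and connectivity of every class graph does yield period $d$. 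The genuine gap is in the connectivity argument, and it is worse than the boundary nuisance you flag: your walks use only the two move types $+p'$ and $-q'$, and under that restriction connectivity can fail to be witnessed at all, no matter how many steps you take and in what order. Concretely, take $p'=3$, $q'=2$, $M=4$ (the tight case $M=p'+q'-1$). Here $H$ has edges $\{0,3\}$, $\{0,2\}$, $\{1,3\}$ and is connected, but from vertex $1$ both of your moves leave the interval ($1+3=4$, $1-2=-1$), so no walk made of $+p'$ and $-q'$ steps ever leaves vertex $1$; that vertex hangs on the rest of $H$ only through the edge $\{1,3\}$, traversed from $1$ as a $+q'$ move, which your scheme forbids. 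Nor is the failure confined to one exceptional vertex: for $p'=5$, $q'=3$, $M=7$, restricted walks starting at $k=0$ reach only $\{0,5,2\}$ and never connect $0$ to $1$. Since the graph is undirected you must allow all four traversals $\pm p'$, $\pm q'$, and then the ``fixed budget $a,b$ with $ap'-bq'=1$, ordered greedily'' mechanism loses its meaning as stated; this step needs to be reorganized, not patched.

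The clean repair keeps your Euclidean idea but runs it as an induction on $p'+q'$ instead of inside a single walk. First reduce to $M=p'+q'-1$: if $M\ge p'+q'$, the top vertex has the in-range neighbour $M-1-p'\ge q'-1$, so induction on $M$ applies. If $\min(p',q')=1$, the jumps of size $1$ connect everything. Otherwise assume $p'>q'\ge 2$. For every $j\le q'-2$ the two-edge detour $j\to j+p'\to j+p'-q'$ stays in range, so $H$ contains, as edges or two-edge paths, all edges of the jump graph with jumps $p'-q'$ and $q'$ on $\{0,\dots,p'-2\}$; that smaller graph is connected by the inductive hypothesis, because $\gcd(p'-q',q')=1$ and its vertex count is $p'-1=(p'-q')+q'-1$. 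Finally, every vertex $v\ge p'-1$ of $H$ has a neighbour inside $\{0,\dots,p'-2\}$: take $v-p'$ if $v\ge p'$, and $v-q'=p'-1-q'\ge 0$ if $v=p'-1$. In particular your troublesome vertex $q'-1$ needs no outgoing walk at all --- it is picked up from above by the inductively connected part. With this substitute for your third step the proof is complete, and the bound is tight exactly as you expected: for $p'=3$, $q'=2$, $M=3$ the graph has the single edge $\{0,2\}$ and vertex $1$ is isolated (the word $aba$ has periods $2$ and $3$ but not period $1$).
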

    In this paper we often use, so called, weak version of this lemma,
    where we only assume that $p+q \le n$.

\section{Basic properties of highly repetitive subwords}

    A word is said to be \emph{highly repetitive} (hr-word) if it is a $k\nth$ power
    of a nonempty word, for $k\ge 3$.

    \begin{figure}[th]
      \label{pref-struct2}
      \begin{center}
        \includegraphics[width=10cm]{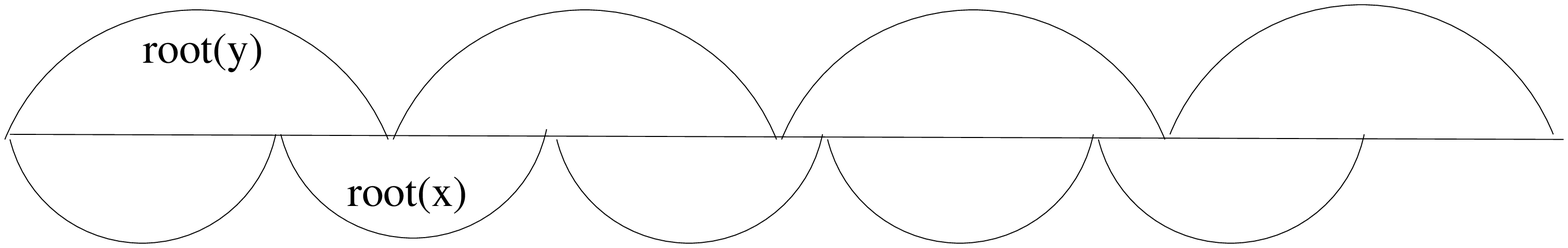}
        \caption{\label{fig1}
          The situation when one hr-word is a (long) prefix of another hr-word
          implies that $\pr(x)=\pr(y)$, consequently $x$ is a suffix of $y$.}
      \end{center}
    \end{figure}

    \begin{lemma}
      \label{l:pref-suf}
      If a hr-word $x$ is a prefix of a hr-word $y$ and $|x| \ge |y|-|\pr(y)|$,
      then $x$ is also a suffix of $y$.
    \end{lemma}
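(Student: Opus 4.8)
The plan is to reduce everything to the single claim that $\pr(x)=\pr(y)$, exactly as the figure caption suggests. Indeed, once both hr-words share a common primitive root $u$, we may write $x=u^a$ and $y=u^b$; since $x$ is a prefix of $y$ we have $|x|\le|y|$, hence $a\le b$, and then the length-$|x|$ suffix of $y=u^b$ is precisely $u^a=x$. So $x$ is a suffix of $y$, and the whole difficulty lies in proving the equality of the primitive roots.

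To obtain $\pr(x)=\pr(y)$, I would set $u=\pr(x)$, $p=|u|$, and $v=\pr(y)$, $q=|v|$, so that $|x|=ap$ and $|y|=bq$ with $a,b\ge 3$; the hypothesis $|x|\ge|y|-|\pr(y)|$ then reads $|x|\ge(b-1)q\ge 2q$. Now $x$, being a prefix of $y$, inherits the period $q$ of $y$, while $x=u^a$ has period $p$; thus $x$ is a word of length $|x|$ possessing two periods $p$ and $q$. The first genuine step is to verify the hypothesis of the periodicity lemma, namely $p+q\le|x|$. This I expect to follow from a short case split: if $p\le q$ then $p+q\le 2q\le|x|$, and if $p>q$ then $p+q<2p\le ap=|x|$. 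This length bookkeeping is the only delicate point, and it is where the seemingly technical bound $|x|\ge|y|-|\pr(y)|$ gets used.

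With $p+q\le|x|$ established, the weak periodicity lemma yields that $\gcd(p,q)$ is a period of $x$. Since $u=\pr(x)$ is primitive, $p$ is the smallest period of $x$ (a standard property of primitive roots), so $\gcd(p,q)\ge p$; as $\gcd(p,q)\le p$ as well, this forces $\gcd(p,q)=p$, i.e. $p\mid q$.

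Finally, both $v$ and $x$ are prefixes of $y$ and $|v|=q\le|x|$, so $v$ is a prefix of $x=u^a$; because $p\mid q$, the length-$q$ prefix of $u^a$ is exactly $u^{q/p}$, whence $v=u^{q/p}$. But $v$ is primitive and $u$ is nonempty, which is impossible unless $q/p=1$. Therefore $q=p$ and $v=u$, that is $\pr(x)=\pr(y)$, and combined with the first paragraph this completes the proof.
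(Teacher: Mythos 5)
Your proof is correct and takes essentially the same route as the paper's: apply the periodicity lemma to $x$, which carries both periods $|\pr(x)|$ and $|\pr(y)|$, and then use primitivity to force $\pr(x)=\pr(y)$, from which the suffix property is immediate. The paper's own proof is only a two-line sketch of this argument; yours supplies the length bookkeeping and divisibility details that it leaves implicit.
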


    \begin{proof}
      Due to the periodicity lemma, both words have the same smallest period
      and it is a common divisor of the lengths of their primitive roots,
      see Figure \ref{fig1}.
      Consequently, we have $\pr(x)=\pr(y)$ and
      $x$ is a suffix of $y$.
      \qed
    \end{proof}

    \begin{figure}[th]
      \label{pref-struct}
      \begin{center}
        \includegraphics[width=8.5cm]{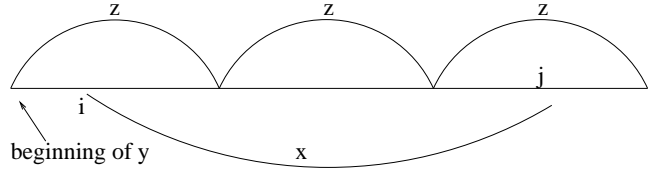}
        \caption{\label{fig2}
          The situation from Lemma~\ref{lemma3}.}
      \end{center}
    \end{figure}

    \begin{lemma}\label{lemma3}
      Assume that $x$ and $y$ are two hr-words, where $y=z^3$
      and $x$ is a subword of $y$ starting at position $i$ and ending at position $j$ such that
      $$i \ \le\  \left\lceil\frac{|\pr(z)|}2\right\rceil+1\quad \textrm{and} \quad j > |z^2|\ .$$
      Then, $|\pr(x)| = |\pr(y)|$.
    \end{lemma}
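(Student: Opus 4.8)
The plan is to compare the smallest periods of $x$ and $y$ directly. Write $y=z^3$ and let $p:=|\pr(z)|$ with $\pr(z)$ primitive; since $z=\pr(z)^m$ for some $m\ge 1$ we have $y=\pr(z)^{3m}$, so by the uniqueness of the primitive root $\pr(y)=\pr(z)$, whence $|\pr(y)|=p$ and $y$ has smallest period $p$. Put $q:=|\pr(x)|$. Because $x$ is an hr-word, $x=\pr(x)^{\ell}$ with $\ell\ge 3$ and $\pr(x)$ primitive; a short application of the periodicity lemma to a word of this shape shows that its smallest period is exactly $q$, and in addition $q\le |x|/3$. The goal then becomes to prove $q=p$.

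First I would record the length of $x$. From $i\le\lceil p/2\rceil+1$ and $j>|z^2|=2|z|\ge 2p$ one gets $|x|=j-i+1>\frac{3}{2}p$, the tight case being $m=1$ (i.e. $z=\pr(z)$). Since $x$ is a factor of the $p$-periodic word $y$ and $|x|>p$, the number $p$ is also a period of $x$; in particular $q\le p$, and $x$ carries the two periods $p$ and $q$.

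Next I would invoke the periodicity lemma. Using $|x|>\frac{3}{2}p$ we have $p<\frac{2}{3}|x|$, and combined with $q\le |x|/3$ this yields $p+q<|x|$, so the weak form of Lemma~\ref{l:periodicity} applies to $x$ and shows that $\gcd(p,q)$ is a period of $x$. As $q$ is the smallest period of $x$ while $\gcd(p,q)\le q$, minimality forces $\gcd(p,q)=q$, that is $q\mid p$.

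The decisive, and I expect most delicate, step is upgrading $q\mid p$ to $q=p$, since the periodicity lemma delivers only divisibility. For this I would exhibit a full period block of $y$ inside $x$: the second copy $y[p+1\dotdot 2p]=\pr(z)$ lies within the range $[i,j]$, because $i\le p+1$ and $j\ge 2p$. Being a factor of $x$ of length $p\ge q$, this block inherits the period $q$; hence the primitive word $\pr(z)$, of length $p$, has period $q$ with $q\mid p$, so it equals $\bigl(\pr(z)[1\dotdot q]\bigr)^{p/q}$. Primitivity of $\pr(z)$ forces $p/q=1$, i.e. $q=p$, and therefore $|\pr(x)|=q=p=|\pr(y)|$. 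The two points requiring care are the position bookkeeping that places a whole copy of $\pr(z)$ inside the span of $x$, and the length estimate $|x|>\frac{3}{2}p$ that makes the periodicity lemma applicable in the first place.
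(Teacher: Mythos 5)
Your proof is correct and follows essentially the same route as the paper's: both derive the length bound $|x|>\frac{3}{2}p$ from the hypotheses on $i$ and $j$, observe that $|\pr(x)|$ and $|\pr(y)|$ are periods of $x$ whose sum is less than $|x|$, apply the weak periodicity lemma, and conclude via primitivity of the roots occurring as subwords of $x$. Your writeup merely makes explicit two details the paper glosses over --- that the smallest period of $x$ is exactly $|\pr(x)|$, and the position bookkeeping showing a full copy of $\pr(z)$ (namely $y[p+1\dotdot 2p]$) lies inside $x$ --- which is a welcome clarification but not a different argument.
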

    \begin{proof}
      Let $x=w^k$, for some $k \ge 3$.
      Using the inequalities on $i$ and $j$ from the lemma, we obtain:
      $$|x| \ =\ j-i+1 \ \ge\ |z^2|+1 - \left\lceil\frac{|\pr(z)|}2\right\rceil - 1 + 1 \ \ge$$
      $$\ge\ 2 \cdot |z| - \left\lceil\frac{|z|}2\right\rceil + 1 \ \ge\ 2 \cdot |z| - \frac{|z|}2 \ =\ \frac{3}{2}\cdot|z|\ .$$
      Let us also observe that $|\pr(x)|$ and $|\pr(y)|$ are both periods of $x$.
      Moreover:
      $$|x| \ =\  |w^k|\  = \ |w|+\frac{k-1}{k}\cdot|x| \ \ge \ |w|+\frac{2}{3}\cdot|x| \ \ge$$
      $$ \ge\ |w|+|z| \ \ge \ |\pr(x)|+|\pr(y)|\ .$$
      From this, by the periodicity lemma, we obtain that $\gcd(|\pr(x)|, |\pr(y)|)$
      is also a period of $x$.
      However, $\pr(x)$ and $\pr(y)$ are subwords of $x$, so $|\pr(x)| = |\pr(y)|$,
      since in the opposite case one of the words $\pr(x),\pr(y)$ would not be primitive.
      \qed
    \end{proof}

  \section{Simple bounds for highly repetitive subwords}
    In this section we give some simple estimations of the number of square subwords
    with nonprimitive roots and cubic subwords.

    \begin{lemma}
      \label{l:bigPowers}
      Let $u$ be a word.
      Let us consider highly repetitive subwords of $u$ of the form $v^k$,
      for $k \ge 3$ and $v$ primitive.
      For each such subword we consider its (last) occurrence in $u$.
      For each position $i$ in $u$, at most one such subword can
      have its (last) occurrence at position $i$.
    \end{lemma}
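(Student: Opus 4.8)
The plan is to argue by contradiction: I would assume that two \emph{distinct} subwords of the allowed type, say $x=v^k$ and $y=w^m$ with $v,w$ primitive and $k,m\ge 3$, both have their last occurrence starting at one and the same position $i$, and then exhibit an occurrence of the shorter of the two strictly to the right of $i$, contradicting the definition of a last occurrence.

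First I would note that since $x$ and $y$ start at the same position and are distinct, they must have different lengths; say $|x|<|y|$, so that $x$ is a proper prefix of $y$. Writing $y=w^m$ with $w=\pr(y)$ primitive and $m\ge 3$, the word $y$ has period $|w|$, and this period is the only structural fact I will exploit.

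Next I would split into two cases according to the length of $x$ relative to $y$. If $|x|\le |y|-|w|$, then the period $|w|$ of $y$ lets me slide the prefix occurrence of $x$ one full period to the right while staying inside $y$: the subword $y[1+|w|\dotdot |x|+|w|]$ equals $x$, which yields an occurrence of $x$ at position $i+|w|>i$ in $u$. If instead $|x|>|y|-|w|=|y|-|\pr(y)|$, then the hypotheses of Lemma~\ref{l:pref-suf} are satisfied (both $x$ and $y$ are hr-words and $x$ is a sufficiently long prefix of $y$), so $x$ is also a suffix of $y$; because $|x|<|y|$ this suffix occurrence sits at position $i+|y|-|x|>i$. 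In either case $x$ has an occurrence strictly after $i$, which is the desired contradiction.

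The conceptual heart — using periodicity to manufacture a later occurrence of the shorter word — is short, so the point that needs care is the boundary situation in which shifting by a full period would run past the right end of $y$. That is precisely where Lemma~\ref{l:pref-suf} does the work, and the one thing I would verify carefully is that the case assumption $|x|>|y|-|w|$ really delivers the length hypothesis $|x|\ge |y|-|\pr(y)|$ of that lemma, which it does since $w$ is primitive and hence $\pr(y)=w$. Beyond this clean case split no delicate computation is required.
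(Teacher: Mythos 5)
Your proof is correct and follows essentially the same route as the paper's: the paper also observes that $|x|\ge |y|-|\pr(y)|$ must hold (since otherwise the period of $y$ would yield a later occurrence of $x$, which is exactly your Case 1) and then invokes Lemma~\ref{l:pref-suf} to conclude $x$ is a suffix of $y$ and hence occurs later, your Case 2. Your version merely makes the period-shift step explicit as a separate case, which is a fine presentational choice.
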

    \begin{proof}
      Let us assume that we have two different hr-words $x$ and $y$ with their last occurrences
      starting at position $i$, and let us assume that $x$ is shorter.
      Then, we have $ |x|\ge |y|-|\pr(y)|$, otherwise the considered occurrence of
      $x$ would not be the last one.

      Now we can apply Lemma~\ref{l:pref-suf} --- $x$ is not only a prefix of $y$,
      but also its suffix.
      Hence, $x$ appears later in the text and the last occurrence of $x$ in $u$ does not
      start at position $i$.
      This contradiction proves that the assumption that the last occurrences of
      $x$ and $y$ start at position $i$ is false.
      \qed
    \end{proof}
    The following fact is a   consequence of Lemma~\ref{l:bigPowers}.

    \begin{theorem}
      \label{c:cubesn}
      The maximum number of highly repetitive subwords of a word of length
      $n\ge 2$ is exactly $n-2$.
    \end{theorem}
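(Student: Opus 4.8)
The plan is to derive this theorem as a short counting consequence of Lemma~\ref{l:bigPowers}, together with an explicit extremal construction for the matching lower bound. The first step is to put each hr-subword into a canonical form: if $x$ is highly repetitive, then $x=\pr(x)^m$ where $\pr(x)$ is primitive, and one checks that $m\ge 3$. Indeed, writing $x=w^k$ with $k\ge 3$, we have $\pr(x)=\pr(w)$, so $|\pr(x)|\le|w|$ and hence $m=|x|/|\pr(x)|\ge k\ge 3$. Since this primitive-root representation is unique, counting distinct hr-subwords is the same as counting distinct pairs $(\pr(x),m)$ with $\pr(x)$ primitive and $m\ge 3$, which is exactly the family of subwords controlled by Lemma~\ref{l:bigPowers}.

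For the upper bound I would then argue by positions. Every hr-word $v^k$ with $v$ nonempty and $k\ge 3$ has length at least $3$, so its (last) occurrence, if it starts at position $i$, must end at position $i+|v^k|-1\le n$, forcing $i\le n-2$. Thus every hr-subword of $u$ has its last occurrence starting at one of the $n-2$ positions $1,\dotsc,n-2$. By Lemma~\ref{l:bigPowers}, each such position carries the last occurrence of at most one hr-subword (in primitive-root form), and by the bijection above this accounts for every distinct hr-subword. Summing over positions gives at most $n-2$ distinct highly repetitive subwords.

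For the lower bound it suffices to exhibit a word attaining the bound, and the unary word $a^n$ does so: its hr-subwords are precisely $a^3,a^4,\dotsc,a^n$, which are $n-2$ distinct words (and the count is $0=n-2$ when $n=2$, consistently with the fact that no hr-word fits in length two). Combining the two bounds yields the exact value $n-2$. I expect no genuine obstacle here, since the crucial structural difficulty---showing that distinct powers sharing a starting position cannot both have last occurrences there---has already been dispatched in Lemma~\ref{l:bigPowers} via the prefix-suffix property of Lemma~\ref{l:pref-suf}; the only points needing care are the reduction to primitive roots and the elementary position count.
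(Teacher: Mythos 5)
Your proposal is correct and follows essentially the same route as the paper: Lemma~\ref{l:bigPowers} gives at most one last occurrence per position, the length-at-least-3 observation excludes positions $n-1$ and $n$, and the word $a^n$ attains the bound. Your explicit reduction to the primitive-root form $\pr(x)^m$ with $m\ge 3$ is a detail the paper leaves implicit (since Lemma~\ref{l:bigPowers} is stated only for powers of primitive words), but it is the same argument, just spelled out more carefully.
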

    \begin{proof}
      From Lemma~\ref{l:bigPowers} we know that at each position there can be
      at most one last occurrence of a nonempty hr-word.
      Moreover, the minimum possible length of such a word is 3.
      Therefore, there can be no such occurrences at positions $n$ and $n-1$.
      On the other hand, this upper bound is reached by the word $a^n$.
      \qed
    \end{proof}

    As a corollary, we obtain a simple upper bound for the number of cubes,
    since cubes are hr-words.
    \begin{corollary}
      Let us consider a word $u$ of length $n$.
      The number of nonempty cubes appearing in $u$ is not greater than $n-2$.
    \end{corollary}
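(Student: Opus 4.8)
The plan is to obtain this bound as an immediate consequence of Theorem~\ref{c:cubesn}. The only thing that really needs to be checked is that every cube occurring in $u$ is itself a highly repetitive subword of $u$; once this is recorded, the distinct cubes form a subcollection of the distinct hr-words, and the bound $n-2$ transfers verbatim.

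First I would verify the inclusion of cubes among hr-words. A cube is by definition a word of the form $w^3$ with $w$ nonempty, which is already a $k\nth$ power with $k=3\ge 3$, hence an hr-word. The one point that merits a moment's attention is the case where the base $w$ is \emph{not} primitive: writing $w=\pr(w)^{m}$ with $m\ge 1$, we get $w^3=\pr(w)^{3m}$, which is again of the form $v^k$ with $v=\pr(w)$ primitive and $k=3m\ge 3$. Thus every cube, regarded simply as a word, coincides with an hr-word of exactly the form used in Lemma~\ref{l:bigPowers}.

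Having established the inclusion, I would finish in one line. Each distinct cube subword of $u$ is a distinct hr-word subword of $u$, so the number of distinct cubes appearing in $u$ is at most the number of distinct hr-words appearing in $u$; by Theorem~\ref{c:cubesn} the latter is at most $n-2$, and the claim follows. There is essentially no hard part here, since the corollary is a genuine specialisation of the theorem. The single place where one must be slightly careful is the nonprimitive-base case discussed above, which is what guarantees that such a cube still matches the precise ``$v^k$ with $v$ primitive'' form on which Lemma~\ref{l:bigPowers}, and hence Theorem~\ref{c:cubesn}, is built.
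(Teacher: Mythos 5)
Your proof is correct and is essentially the paper's own argument: the paper derives the corollary in one line from Theorem~\ref{c:cubesn} by noting that cubes are hr-words, exactly as you do. Your extra check for nonprimitive bases is harmless but not even needed to invoke the theorem, since the definition of an hr-word ($x^k$ with $k\ge 3$, $x$ nonempty) does not require the base to be primitive, so a cube qualifies immediately.
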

    We improve this upper bound substantially in the next sections.
    However, it requires a lot of technicalities.
    Another implication of Theorem~\ref{c:cubesn} is a tight bound for the number of np-squares.

    \begin{theorem}
      \label{t:np-squaresn}
      Let $u$ be a word of length $n$.
      The maximum number of nonempty np-squares appearing in $u$ is exactly
      $\left\lfloor \frac{n}{2}\right\rfloor -1$.
    \end{theorem}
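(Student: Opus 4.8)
The plan is to establish matching lower and upper bounds, both resting on the count of highly repetitive subwords from Theorem~\ref{c:cubesn}.

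For the lower bound I would take $u = a^n$. Its np-squares are exactly the words $a^e$ with $e$ even and $4 \le e \le n$, because $a^e = (a^{e/2})^2$ and $a^{e/2}$ is nonprimitive precisely when $e/2 \ge 2$. The admissible even exponents are $4, 6, \ldots, 2\lfloor n/2\rfloor$, giving exactly $\lfloor n/2\rfloor - 1$ distinct np-squares; hence the maximum is at least $\lfloor n/2\rfloor - 1$.

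The upper bound relies on a clean characterization: a word is an np-square if and only if it equals $v^e$ for some primitive $v$ and some even exponent $e \ge 4$. This is because writing an np-square as $xx$ with $x$ nonprimitive forces $x = v^m$ with $v = \pr(x)$ and $m \ge 2$, so $xx = v^{2m}$. I would then set up a ``doubling'' injection from the np-squares occurring in $u$ into the hr-words occurring in $u$: send the np-square $v^e$ to the two words $v^{e-1}$ and $v^e$. Both are hr-words, since their exponents $e-1 \ge 3$ and $e \ge 4$ exceed $2$ and their primitive root is $v$; and both occur in $u$, being subwords of the occurring word $v^e$.

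The decisive point is that the pairs $\{v^{e-1}, v^e\}$ assigned to distinct np-squares are pairwise disjoint: for a fixed root the exponent intervals $\{3,4\}, \{5,6\}, \{7,8\}, \ldots$ never overlap, while for distinct primitive roots the associated powers are never equal, as equal powers would share a primitive root. Therefore, if $N$ is the number of np-squares in $u$, these disjoint pairs exhibit at least $2N$ distinct hr-words in $u$, whence $2N \le n-2$ by Theorem~\ref{c:cubesn}. Since $N$ is an integer this yields $N \le \lfloor (n-2)/2 \rfloor = \lfloor n/2\rfloor - 1$, matching the lower bound. I expect the main obstacle to be the careful verification of disjointness of the image pairs together with the fact that each image word really occurs in $u$; once the doubling injection is secured, integrality converts the bound $(n-2)/2$ into the exact floor $\lfloor n/2\rfloor - 1$.
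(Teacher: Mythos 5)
Your proof is correct and follows essentially the same route as the paper: both arguments write each np-square as $v^{2i}$ with $v$ primitive and $i\ge 2$, pair it with the hr-word $v^{2i-1}$ occurring inside it, and then invoke Theorem~\ref{c:cubesn} to bound the combined count of these hr-words by $n-2$, with $a^n$ giving the matching lower bound. The only cosmetic difference is that you establish disjointness of the pairs $\{v^{e-1},v^e\}$ via parity and uniqueness of the primitive root, where the paper instead remarks (via the periodicity lemma) that the odd powers $v^{2i-1}$ are not themselves np-squares.
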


    \begin{proof}
      Each nonempty np-square can be viewed as $v^{2i}$ for some nonempty primitive $v$ and $i \ge 2$.
      However, each such np-square contains a subword $v^{2i-1}$, which is not an np-square
      (due to the periodicity lemma), but still a hr-word.
      Hence, the number of nonempty subwords of the form $v^{2i-1}$ (for primitive $v$ and $i \ge 2$),
      appearing in the given word, is not smaller than the number of nonempty np-squares.

      Observe that Theorem~\ref{c:cubesn} limits the total number of both
      subwords of the form $v^{2i}$ and $v^{2i-1}$ by $n-2$.

      Hence, the total number of nonempty np-squares appearing in the given word is not greater
      than $\frac{n}{2}-1$, and since it is integer, it is not greater than
      $\left\lfloor \frac{n}{2}\right\rfloor -1$.
      On the other hand, this upper bound is reached by the word $a^n$.
      \qed
    \end{proof}

\section{The structure of occurrences of cubic subwords}
    In this section we introduce some combinatorial facts about words that are necessary
    in the proof of the $\frac{4}{5} n$ upper bound on the number of cubes
    in a word of length $n$.
    \begin{lemma}
      \label{lem:1.5p}
      Let $v^3$ and $w^3$ be two nonempty cubes occurring in a word $u$
      at positions $i$ and $j$ respectively, such that:
      $$i\  <\  j \ \le\  i+\left\lceil\frac{|\pr(v)|}2\right\rceil\ .$$
      Then:
      $$|\pr(w)|=|\pr(v)|\  \ \textrm{or}\ \ |\pr(w)| \ge 2\cdot|\pr(v)|-(j-i-1)\ .$$
    \end{lemma}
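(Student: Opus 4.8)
Looking at this lemma, I need to prove something about two cubes $v^3$ and $w^3$ whose last occurrences start at close positions $i < j \le i + \lceil |\pr(v)|/2 \rceil$. Let me think about the structure.

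The cube $v^3$ occurs at position $i$, and $w^3$ at position $j$, with $j$ not too far from $i$ (within about half the period of $v$). I want to show either the periods match, or $|\pr(w)|$ is large.

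Let me plan the approach.

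---

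The plan is to set up the geometry of the two occurrences and apply the periodicity lemma to a carefully chosen overlap region. Write $p = |\pr(v)|$ and note $|v^3| = 3|v|$ is a multiple of $p$, so $p$ is a period of the factor $u[i \dotdot i + 3|v| - 1]$. Let $q = |\pr(w)|$, a period of $u[j \dotdot j + 3|w| - 1]$. The two displacements being close, $d := j - i$ satisfies $1 \le d \le \lceil p/2 \rceil$, so in particular $d < p$ (roughly), which will be the key quantitative fact.

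First I would distinguish the case $|w| \ge |v|$ from $|w| < |v|$, and within each, locate the overlap of the two occurrence-intervals. In the main case I expect the overlap to be long — at least $2|v|$ or so — because $w^3$ starts only slightly to the right of $v^3$ and cubes are long. On this overlap both $p$ and $q$ are periods. I would then apply Lemma~\ref{l:periodicity} (the strong periodicity lemma) to the overlap: if $p + q \le (\text{overlap length}) + \gcd(p,q)$, then $g := \gcd(p,q)$ is a period of the overlap. Because $\pr(v)$ and $\pr(w)$ are primitive, a common smaller period $g$ dividing both $p$ and $q$ would force $g = p = q$ unless $g$ equals one of them; the primitivity argument (exactly as used at the end of the proof of Lemma~\ref{lemma3}) then collapses the situation to $|\pr(w)| = |\pr(v)|$, giving the first alternative.

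The crux is the \emph{negation} of the periodicity hypothesis. The second alternative $|\pr(w)| \ge 2p - (d-1)$, i.e. $q \ge 2p - (j - i - 1)$, is exactly what one gets by rearranging the failure condition $p + q > L + \gcd(p,q)$ for the relevant overlap length $L$. So the structure of the argument is a dichotomy: either the periodicity lemma applies and yields equal periods, or it fails, and reading off the failure inequality directly produces the claimed lower bound on $q$. I would compute $L$ precisely — accounting for the shift $d$ and the fact that the occurrences are \emph{last} occurrences, which bounds how far $w^3$ can extend relative to $v^3$ — and verify that the failure inequality, combined with $\gcd(p,q) \le q$ and $d \le \lceil p/2 \rceil$, rearranges into $q \ge 2p - (d-1)$.

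The main obstacle I anticipate is getting the overlap length $L$ exactly right and handling the boundary contributions of the $\gcd$ term and the ceiling in $d \le \lceil p/2 \rceil$, since the target bound $2p - (j-i-1)$ is tight and leaves no slack. I would need Lemma~\ref{lemma3} or the ``last occurrence'' property to control the right endpoints: because these are last occurrences, $w^3$ cannot be a proper factor sitting inside the periodic structure of $v^3$ in a way that would let it recur, which pins down the relationship between $|w|$, $|v|$, and the endpoints and justifies the exact value of $L$ used in the periodicity application.
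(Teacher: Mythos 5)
Your toolkit is the right one, and for one of the two cases it is exactly the paper's argument: when the last occurrence of $w^3$ sticks out beyond the right end of $v^3$, the overlap is the prefix $x$ of $w^3$ lying inside $v^3$, of length $|x| = |v^3|-(j-i) \ge 3p-(j-i)$ (writing $p=|\pr(v)|$, $q=|\pr(w)|$); both $p$ and $q$ are periods of $x$, so either the (weak) periodicity lemma applies and primitivity of the roots forces $q=p$, or it fails, and $p+q > |x| \ge 3p-(j-i)$ rearranges to $q \ge 2p-(j-i)+1 = 2p-(j-i-1)$. This is precisely the dichotomy you describe, and the weak form of Lemma~\ref{l:periodicity} suffices, so the $\gcd$ bookkeeping you worry about never arises.

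The genuine gap is that your uniform dichotomy breaks down in the other case, where the last occurrence of $w^3$ lies entirely inside $v^3$ --- and your proposed case split ($|w|\ge|v|$ versus $|w|<|v|$) does not isolate it. There the overlap is all of $w^3$, so $L=3|w|$ depends on $q$ rather than on $p$: the failure inequality $p+q>L$ reads $p > 3|w|-q \ge 2q$, a lower bound on $p$, and nothing resembling $q \ge 2p-(j-i-1)$ can be read off from it. The paper resolves this case not by a dichotomy but by showing that failure is impossible: since this is the \emph{last} occurrence of $w^3$, it must end inside the last of the three $v$'s (otherwise the period $p$ of $v^3$ would reproduce $w^3$ at position $j+p$ or further, contradicting lastness), whence $|w^3| \ge 2|v|-\lceil p/2\rceil+1 \ge \frac{3}{2}|v|$; then $p+q \le |v|+|w| \le 2|w|+|w| = |w^3|$, so the periodicity lemma always applies and the conclusion in this case is $q=p$ unconditionally --- this is exactly what Lemma~\ref{lemma3} packages. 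You gesture at Lemma~\ref{lemma3} and the last-occurrence property as a way to "justify the value of $L$," but the point is structural, not quantitative: in the contained case the last-occurrence property must be used to rule out the failure branch altogether, because the failure branch there simply does not yield the stated conclusion.
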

    \begin{proof}
      Let us denote $p=|\pr(v)|$, $q=|\pr(w)|$, and let $k$ be the position of
      the last letter of $w^3$.
\vskip 0.2cm
      \noindent{\bf Case 1.}\

      \noindent
      Let us first consider the case, when the (last) occurrence of $w^3$ is totally inside $v^3$.
      Observe that $k$ must then be within the last of the three $v$'s,
      since otherwise $w^3$ would occur in $u$ at position $j+p$ or further
      (see also Fig. \ref{pref-struct}).
      Hence, due to Lemma~\ref{lemma3}, we obtain $q=p$.
\vskip 0.2cm
\noindent{\bf Case 2.}

      \noindent
      In the opposite case, let $x$ be the maximal prefix of $w^3$ that lays inside $v^3$.
      If $p \ne q$ then $p+q$ must be greater than $|x|$.
      Indeed, if $p+q\le |x|$ then both $\pr(v)$ and $\pr(w)$ would be subwords
      of $x$, so if $p\ne q$, then one of them would not be primitive
      due to the periodicity lemma.
      Therefore:
      $$ p+q > |x| > |v^3|-(j-i) \ge 3p-(j-i) \ .$$
      Consequently $q \ge 2p-(j-i)+1$.
      \qed
    \end{proof}
    Let us introduce a useful notion of \emph{$p$-occurrence}.

    \begin{definition}
      A \emph{$p$-occurrence} is the (last) occurrence of a cube
      with primitive root of length $p$.
    \end{definition}
    It turns out that the primitive roots of cubes appearing close to each other
    cannot be arbitrary.
    It is formally expressed by the following lemma.

    \begin{lemma}
      \label{l:ppp}
      Let $a_1,a_2,\ldots,a_{p+1}$ be an increasing sequence of positions in a word $u$,
      such that $a_{j+1} \le a_j+p$ for $j=1,2,\ldots,p$.
      It is not possible for all these positions to contain $p$-occurrences.
    \end{lemma}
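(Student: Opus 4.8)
We have $p+1$ positions $a_1 < a_2 < \cdots < a_{p+1}$, each separated by at most $p$, and each carrying a $p$-occurrence — that is, a last occurrence of a cube whose primitive root has length exactly $p$. I want to derive a contradiction. The key tool is Lemma~\ref{lem:1.5p}, which controls how the primitive-root length can change between two nearby cubes. Since every one of our occurrences already has primitive-root length exactly $p$, the "or" branch of Lemma~\ref{lem:1.5p} would force the later root to satisfy $|\pr(w)| \ge 2p - (j-i-1)$, and I should check whether the spacing $a_{j+1}-a_j \le p$ makes this branch impossible, thereby pinning us to the equality branch $|\pr(w)| = |\pr(v)|$.

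**The plan.** First I would note that $p \ge 1$, and consider consecutive pairs $a_j, a_{j+1}$. Since $a_{j+1} - a_j \le p = |\pr(v)|$, and in particular $a_{j+1}-a_j \le \lceil p/2\rceil$ is *not* guaranteed, I may first need to interpolate: the gap is at most $p$, but Lemma~\ref{lem:1.5p} needs gap at most $\lceil p/2 \rceil$. The clean approach is to apply Lemma~\ref{lem:1.5p} to pairs whose spacing is genuinely small, or to argue that the $p$-occurrences at $a_1,\ldots,a_{p+1}$ all share a common structure. Concretely, let $v^3$ be the cube at $a_1$; all roots here have length $p$, so the second branch $q \ge 2p - (j-i-1)$ would require $q \ge 2p - (p-1) = p+1 > p$, contradicting $q = p$. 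Hence whenever Lemma~\ref{lem:1.5p} applies, we are forced into $|\pr(w)| = |\pr(v)|$, meaning all these cubes have the *same* primitive root, up to its cyclic shift.

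**Reaching the contradiction.** Once I know that all $p+1$ cubes have primitive root of length $p$ that are cyclic rotations of one another (a single periodic block of period $p$ spanning positions $a_1$ through $a_{p+1}+3p-1$), the occurrences become highly constrained. Within a stretch of period $p$, a $p$-occurrence of a cube is a *last* occurrence, and I claim two distinct positions among $a_1,\ldots,a_{p+1}$ cannot both be last occurrences of their respective length-$p$-rooted cubes if they sit inside one common periodic region: by periodicity, the cube rooted at $a_j$ reoccurs $p$ positions later, so its last occurrence cannot be at $a_j$ unless the periodic region ends soon after. The heart of the matter is a counting/pigeonhole argument: $p+1$ positions, each a last occurrence, forced to lie in a window of period $p$, is one too many — there are essentially only $p$ distinct cyclic shifts of the root, and the "last occurrence" condition assigns at most one position per shift.

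**The main obstacle.** The hard part will be handling the case where the gaps $a_{j+1}-a_j$ exceed $\lceil p/2 \rceil$, so that Lemma~\ref{lem:1.5p} cannot be applied directly to consecutive positions, together with the case in that lemma where the later cube is *not* contained in the earlier one. I expect the cleanest route is to show that if any gap is large, the cubes still overlap enough to force a common period via the periodicity lemma directly (since two cubes of period $p$ overlapping in more than $p$ positions share the period $p$), and then to make the pigeonhole on last occurrences precise: I must verify that having period $p$ throughout forces at most $p$ distinct last occurrences of cubes with root length $p$, contradicting the existence of $p+1$ of them.
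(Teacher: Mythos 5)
Your plan has a genuine gap, and it sits exactly where the paper's proof does its real work. You reduce everything to the claim that, inside a common $p$-periodic region, the last-occurrence condition ``assigns at most one position per cyclic shift'' of the root. That claim is never justified in your proposal, and it is false as stated: two cubes can share the same primitive root $x$ (same shift) and still be \emph{different words}, namely $x^3$, $x^6$, $x^9,\ldots$ --- being a $p$-occurrence only fixes $|\pr(w)|=p$, not the exponent. For instance, if $u=x^9$ with $x$ primitive of length $p$, the last occurrences of $x^3$, $x^6$ and $x^9$ sit at three distinct positions ($6p+1$, $3p+1$ and $1$), all congruent modulo $p$, all $p$-occurrences; so a periodic window by itself does not limit same-shift last occurrences to one. (These positions have gaps $3p$, so they do not violate the lemma --- but that only shows the hypothesis $a_{j+1}\le a_j+p$ must be exploited more finely than through ``everything lies in one periodic region,'' which is all your argument uses.) Your fallback --- ``the cube at $a_j$ reoccurs $p$ positions later unless the periodic region ends soon after'' --- also cannot be completed pairwise on the two positions produced by your pigeonhole: they may be as far as $p^2$ apart, and if the \emph{earlier} cube is the longer one, the reoccurrences of the later, shorter cube's word that periodicity provides may all land \emph{before} $a_j$, giving no contradiction.

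The paper closes this gap by proving, \emph{before} any pigeonhole, that all $p+1$ cubes have the same \emph{length}, and it does so on \emph{consecutive} pairs, where the hypothesis $a_{j+1}-a_j\le p$ is decisive: if the lengths $3kp$ and $3lp$ differ, they differ by at least $3p$, which dominates the gap of at most $p$. So either the later cube ends at least $2p$ before the end of the earlier one and hence reoccurs $p$ positions further (case $l<k$), or the later cube extends at least $3p$ beyond the earlier one, which then reoccurs $p$ positions further inside the union of the two periodic regions (case $k<l$); either way some ``last'' occurrence is contradicted. Only after equal lengths are established does the pigeonhole on the $p$ rotations produce two \emph{identical} cubes at distinct positions, which is the final contradiction. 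Two smaller remarks: your invocation of Lemma~\ref{lem:1.5p} is vacuous, since its conclusion $|\pr(w)|=|\pr(v)|$ is already guaranteed by the hypothesis that every position carries a $p$-occurrence (so your concern about gaps exceeding $\left\lceil\frac{p}2\right\rceil$ is moot); and the fact that the roots are cyclic rotations of one another does not come from that lemma (which only constrains lengths) but simply from each cube, having length at least $3p$, overlapping the next starting position by at least $2p$, so that each root lies inside the preceding cube's periodic region.
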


    \begin{proof}
      Let us assume, to the contrary, that at each of the positions $a_1,a_2,\ldots,a_{p+1}$
      there is a $p$-occurrence.
      Observe that the inequalities from the hypothesis of the lemma imply that
      the primitive roots of cubes occurring at these positions are all
      cyclic rotations of each other.
      There are only $p$ different rotations of such primitive roots;
      therefore, due to the pigeonhole principle, some two of them must be equal.

      It suffices to show that all these cubes have the same length,
      because then some two of them are equal, and consequently one of them is not the last
      occurrence of the cube.

      Assume to the contrary that some of the considered cubes have different lengths.
      Let $a_j$ and $a_{j+1}$ be two considered positions, such that cubes
      ($v^3$ and $w^3$ respectively) occurring at these positions have different lengths
      ($3kp$ and $3lp$ respectively, for $k\ne l$).
      Let us consider two cases.
\paragraph{\bf Case 1.}
      If $l<k$, then $3kp-3lp\ge 3p$, and $w^3$  occurs in $u$ at position $a_{j+1}+p$ or further
      (see Fig.~\ref{lemma4-1}).
    \begin{figure}[th]
      \label{l_less_than_k}
      \begin{center}
        \includegraphics[width=10cm]{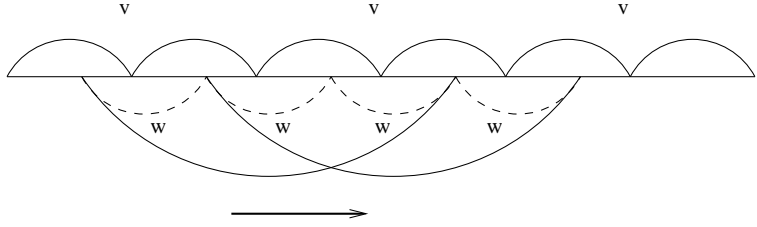}
        \caption{\label{lemma4-1}
          The positions of cubes $v^3$ and $w^3$ in the case $l<k$:
          $a_{j+1}$ is not the last occurrence of $w^3$.
        }
      \end{center}
    \end{figure}

\paragraph{\bf Case 2.}
      If $k<l$, then $3lp-3kp\ge 3p$ and $v^3$ appears in $u$ at position $a_j+p$ or further
      (see Fig.~\ref{lemma4-2}).
    \begin{figure}[th]
      \label{l_greater_than_k}
      \begin{center}
        \includegraphics[width=10cm]{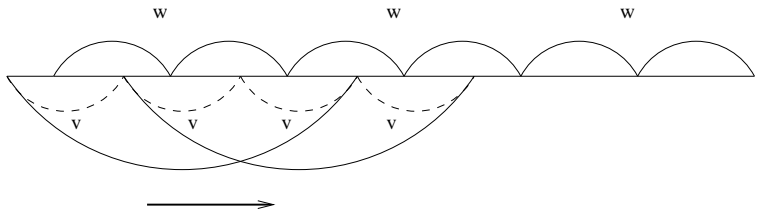}
        \caption{\label{lemma4-2}
          The positions of cubes $v^3$ and $w^3$ in the case $k<l$:
          $a_j$ is not the last occurrence of $v^3$.
        }
      \end{center}
    \end{figure}

      \medskip
      In both cases we obtain a contradiction.
      Hence, it is not possible that the lengths of the cubes differ.
      \qed
    \end{proof}
    Let us introduce a notion of independent prefixes.

    \begin{definition}\label{d:independent_prefix}
      We say that $v$ is the {\em independent prefix} of $u$ if it is
      the shortest prefix of $u$ that is:
      \begin{enumerate}
        \item
          a single letter word, if there is no occurrence of a cube at the first position of $u$,
          or otherwise
        \item
          a prefix that ends with a $q$-occurrence (for some $q\ge 1$)
          followed by exactly $\left\lceil\frac{q}2\right\rceil$
          positions without any occurrences
          (here all occurrences are considered within $u$).
      \end{enumerate}
    \end{definition}
    It is not obvious that the above definition is valid.
    Therefore, we prove the following lemma:

    \begin{lemma}
      \label{lem:ind_prefix_ex}
      For every word $u$, there exists an independent prefix $v$ of $u$.
    \end{lemma}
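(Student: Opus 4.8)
The plan is to follow the two clauses of Definition~\ref{d:independent_prefix} directly: the first clause is immediate, and for the second I will exhibit a prefix of the required shape by a greedy left-to-right scan whose termination is forced by the finiteness of $u$. Assume $u$ is nonempty. If no cube occurs at the first position of $u$, then the one-letter prefix $u[1]$ satisfies clause~(1) and is obviously the shortest prefix of that form, so an independent prefix exists. Thus I may assume from now on that a cube does occur at position $1$, and the task reduces to producing some prefix of the form described in clause~(2); the independent prefix is then simply the shortest such prefix, which exists because the set of admissible prefixes will be shown to be nonempty and lengths are positive integers.

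Before constructing it I record a length estimate ensuring that the relevant gap always lies inside $u$. By Lemma~\ref{l:bigPowers}, at each position there is at most one last occurrence of a cube, so to each \emph{marked} position (one carrying an occurrence) a well-defined root length is attached. If a $q$-occurrence starts at position $s$, its cube is some $w^3$ with $|\pr(w)|=q$, hence $|w^3|\ge 3q$; since this occurrence lies inside $u$ we have $s+3q-1\le n$, i.e.\ $s\le n-3q+1$, and therefore
$$ s+\left\lceil\frac{q}{2}\right\rceil \ \le\ s+q \ \le\ n-2q+1 \ \le\ n-1 \ <\ n. $$
Consequently, for every $q$-occurrence at $s$ the $\left\lceil q/2\right\rceil$ positions $s+1,\dots,s+\left\lceil q/2\right\rceil$ are genuine positions of $u$, so the clause~(2) condition is never vacuous near the right end of the word.

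Now I build a strictly increasing sequence $s_1<s_2<\cdots$ of marked positions. Put $s_1:=1$, which carries a $q_1$-occurrence by the case assumption. Given $s_k$ carrying a $q_k$-occurrence, inspect the block $s_k+1,\dots,s_k+\left\lceil q_k/2\right\rceil$: if it contains no marked position, stop; otherwise let $s_{k+1}$ be any marked position of this block and continue. Each step satisfies $s_k<s_{k+1}\le s_k+\left\lceil q_k/2\right\rceil\le n-1$, so the sequence is strictly increasing and bounded by $n$, whence the scan halts after finitely many steps. At the halting index the prefix $u[1\dotdot s_k+\left\lceil q_k/2\right\rceil]$ ends with a $q_k$-occurrence (the last one whose start lies in the prefix) followed by exactly $\left\lceil q_k/2\right\rceil$ positions with no occurrence, so it meets clause~(2). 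Hence the family of clause~(2) prefixes is nonempty, and its shortest member is the desired independent prefix.

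The only genuine obstacle is the boundary issue handled in the second paragraph: one must guarantee that the $\left\lceil q/2\right\rceil$ trailing positions of each candidate actually exist inside $u$, since otherwise the phrase ``positions without any occurrences'' could be satisfied trivially by nonexistent positions and the greedy scan might fail to terminate at a legitimate prefix. The cube-length bound $|w^3|\ge 3q$ is exactly what forecloses this; everything else is the finiteness (infinite-descent) argument on the strictly increasing sequence of marked positions.
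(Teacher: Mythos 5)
Your proof is correct, but it takes a genuinely different route from the paper's. The paper argues by contradiction with an extremal choice: let $q$ be the \emph{maximum} root length over all occurrences in $u$ and let $i$ be the \emph{rightmost} $q$-occurrence; Lemma~\ref{lem:1.5p} then forces the $\left\lceil q/2\right\rceil$ positions after $i$ to be occurrence-free, since an occurrence at a position $j$ in that range would have root length either $q$ (contradicting that $i$ is the rightmost $q$-occurrence) or at least $2q-(j-i-1)>q$ (contradicting the maximality of $q$), so $u[1\dotdot i+\left\lceil q/2\right\rceil]$ satisfies clause~(2) of Definition~\ref{d:independent_prefix}. You instead run a greedy left-to-right scan over marked positions, and termination follows from the bare fact that the positions $s_1<s_2<\cdots$ strictly increase and are bounded by $|u|$ --- you never invoke Lemma~\ref{lem:1.5p} or any structural property of cubes beyond their length. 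Your route is therefore more elementary and self-contained, and it has the side benefit of making explicit the boundary check (that the $\left\lceil q/2\right\rceil$ trailing positions really lie inside $u$, because a cube with root length $q$ occupies at least $3q$ positions), which the paper's proof leaves implicit. What the paper's choice buys is brevity: Lemma~\ref{lem:1.5p} is needed anyway for the main counting argument, so existence comes essentially for free in one paragraph, whereas your argument rebuilds termination from scratch.
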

    \begin{proof}
      If there is no occurrence of a cube at the first position of $u$, then obviously $v=u[1]$.

      In the opposite case, let us assume --- to the contrary --- that the independent prefix does not exist.
      Let $q$ be the maximum such value, that there exists a $q$-occurrence in $u$,
      and let $i$ be the rightmost position in $u$ that contains a $q$-occurrence.
      From Lemma \ref{lem:1.5p}, $\left\lceil\frac{q}2\right\rceil$ positions
      following $i$ do not contain any occurrences of cubes.
      Thus, the prefix $u[1\dotdot i+\left\lceil\frac{q}2\right\rceil]$ satisfies
      the definition of an independent prefix --- a contradiction.
      \qed
    \end{proof}

\section{Algorithm Abstract-Simulation}
      Let $v$ be the independent prefix of a word $u$ and let $|v|>1$.
      Let $(c_i)_{i=1}^{|v|}$ be a sequence describing the occurrences starting within $v$:
      $c_i=0$ iff there are no occurrences in position $u[i]$, and
      $c_i=q$ iff there is a $q$-occurrence in position $u[i]$.
      We start with the following observations.
      \begin{enumerate}[a)]
        \item
          If $c_i$ and $c_j$ is a pair of consecutive nonzero elements of $c$
          (i.e.\ $i<j$, $c_i, c_j > 0$ and $c_{i+1} = \dots = c_{j-1}=0$) then
          $j-i \le \left\lceil \frac{c_i}{2} \right\rceil$.
          Indeed, if $j-i > \left\lceil \frac{c_i}{2} \right\rceil$, then the prefix of $u$ of length
          $i + \left\lceil \frac{c_i}{2} \right\rceil$ or shorter would be an independent
          prefix of $u$.
\vskip 0.1cm
        \item
          For $c_i$ and $c_j$ as in a), $c_j \ge 2c_i-(j-i-1)$.
          This observation is due to Lemma~\ref{lem:1.5p}.
\vskip 0.1cm
        \item
          From Lemma~\ref{l:ppp} and due to a) we have that
          no $q+1$ consecutive positive elements of $c$ are equal to $q$.
      \end{enumerate}

      From now on, we abstract from the actual word $u$, and
      focus only on the properties of sequence $c$.
      We will analyze the ratio $R$ of nonzero elements of $c$ to the length of $c$.

      Let us observe that if $c$ contains such a pair of equal elements $c_i=c_j>0$,
      that all the elements between them are equal zero, then all the elements between
      $c_i$ and $c_j$ can be removed from $c$ without decreasing $R$.
      Also, if $c$ contains a subsequence of consecutive elements
      equal to $q$ ($q>0$) of length less than $q$ then this subsequence can be extended
      to length $q$ without decreasing $R$.
      Let $c'$ be the sequence obtained from $c$ by performing
      the described modification steps (as many times as possible).
      Observe that none of these steps violates properties a)--c).

      Every possible sequence $c'$ can be generated by the (nondeterministic)
      pseudocode shown below.
      The following variables are used in the pseudocode:
      \begin{itemize}
        \item $p$ --- the value of the last positive element of $c'$
        \item $len$ --- the length of the sequence $c'$ without
          $\lceil p/2\rceil$ trailing zeros
        \item $occ$ --- the number of positive elements in $c'$
        \item $l$ --- the gap between consecutive different positive elements
          of $c'$
        \item $\alpha$ --- the difference between the actual value of a positive
          element of $c'$ and the lower bound from Lemma \ref{lem:1.5p}.
      \end{itemize}
      Each step of the \textbf{repeat} loop corresponds to extending sequence $c'$,
      i.e.\ adding $l$ zeros and $p$ elements of value $p$.

      \begin{figure}[h!]
        \centering
        \begin{tabular}{cccccccc}
          3 3 3 & 0 &
          $\underbrace{5\ldots 5}_{5\ \mbox{times}}$ & 0 0 &
          $\underbrace{20\ldots 20}_{20\ \mbox{times}}$ &\ \ $\underbrace{0\ldots 0}_{6\ \mbox{times}}$ &
          \ \ $\underbrace{34\ldots 34}_{34\ \mbox{times}}$ &\ \ $\underbrace{0\ldots 0}_{17\ \mbox{times}}$ \\
        \end{tabular}
        \caption{An example of sequence $c'$.
          The length of the sequence is 88 and it contains 62 positive elements.
          The ratio is $62/88\approx 0.70<4/5$.}
      \end{figure}
      \vskip 0.3cm
      Note that the algorithm specified by the pseudocode is nondeterministic
      in several different aspects --- the initial value of $p$, the number of
      steps of the \textbf{repeat} loop and values of $l$ and $\alpha$.
\myskip
      \begin{center}
        \fbox{
          \begin{minipage}{9cm}
            \flushleft
            \medskip \medskip
            ~~{\bf\large Algorithm Abstract-Simulation}
            \vskip 0.2cm \noindent
            ~~~$p:=$\ some positive integer;\\
            ~~~$occ:=p; \quad len:=p$;\\
            ~~~output: $\underbrace{p\dots p}_{p\ \mathrm{times}}$\vskip 0.2cm \noindent
            ~~~{\bf repeat an arbitrary number of times}\vskip 0.2cm \noindent
            ~~~~~~~~~$\textsl{Invariant}\ I(p,occ,len): \frac{occ}{len+\frac{p}{2}} \ \le\  \frac{4}{5}$.\\
            ~~~~~~~~~$l:=$\ some integer from interval $[0,\left\lceil\frac{p}2\right\rceil)$;\\
            ~~~~~~~~~$\alpha:=$\ some nonnegative integer;\\
            ~~~~~~~~~$p:=2p-l+\alpha$;\\
            ~~~~~~~~~$occ:=occ+p$;\\
            ~~~~~~~~~$len:=len+l+p$;\\
            ~~~~~~~~~output: $\underbrace{0\dots 0}_{l\ \mathrm{times}}\,
                           \underbrace{p\dots p}_{p\ \mathrm{times}}$\\
            \medskip  \medskip
          \end{minipage}
        }
      \end{center}


      \myskip
\section{Upper bound on the number of cubic subwords}
    \begin{lemma}[Invariant lemma]
      \label{l:invariant}
      The following condition $I(p,occ,len)$:
      $$\frac{occ}{len+\frac{p}2} \le \frac{4}{5}$$
      is an invariant of the Abstract-Simulation Algorithm.
    \end{lemma}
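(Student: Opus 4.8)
The plan is to prove $I(p,occ,len)$ by induction on the number of iterations of the \textbf{repeat} loop, since the invariant is asserted to hold at the start of each iteration. First I would verify the base case: after the initialization we have $occ = len = p$, so the quantity to bound is $\frac{p}{p + p/2} = \frac{p}{3p/2} = \frac{2}{3}$, which is indeed $\le \frac{4}{5}$. The heart of the argument is the inductive step: assuming $I(p,occ,len)$ holds at the top of an iteration, I must show that after the updates $p' := 2p - l + \alpha$, $occ' := occ + p'$, $len' := len + l + p'$, the new inequality $\frac{occ'}{len' + p'/2} \le \frac{4}{5}$ still holds, for every admissible choice $l \in [0, \lceil p/2\rceil)$ and $\alpha \ge 0$.

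The key algebraic observation I would exploit is that the invariant is equivalent to the linear inequality $5\,occ \le 4\,len + 2p$ (clearing the denominator $len + p/2 > 0$). Working in this linear form avoids fractions entirely. Under the updates the new target becomes $5(occ + p') \le 4(len + l + p') + 2p'$, i.e.\ $5\,occ + 5p' \le 4\,len + 4l + 6p'$. Substituting the inductive hypothesis $5\,occ \le 4\,len + 2p$, it suffices to prove the \emph{local} inequality
$$4\,len + 2p + 5p' \le 4\,len + 4l + 6p',$$
which simplifies to $2p \le 4l + p'$, i.e.\ $p' \ge 2p - 4l$. Since $p' = 2p - l + \alpha$ with $\alpha \ge 0$, we have $p' \ge 2p - l \ge 2p - 4l$ whenever $l \ge 0$, and this holds trivially. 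So the inductive step reduces to an elementary comparison.

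I expect the main subtlety — and the step I would scrutinize most carefully — to be the reduction from the inductive hypothesis to the local inequality, because it relies on the monotonicity fact that strengthening $5\,occ$ to its upper bound $4\,len + 2p$ preserves the direction of the desired inequality; one must check that $len$ cancels cleanly and that no term involving the \emph{old} $p$ is lost. A secondary point to confirm is that the admissible range $l \in [0,\lceil p/2\rceil)$ is never actually needed in its full strength for this bound (only $l \ge 0$ is used), which is consistent with the fact that the constraint $l < \lceil p/2\rceil$ comes from observation a) and governs the structure of $c'$ rather than the invariant itself. Once the linear reduction is in place the proof is complete, and the constant $\frac{4}{5}$ appears as exactly the threshold making $2p \le 4l + p'$ survive the worst case $l = 0$, $\alpha = 0$, where $p' = 2p$ and the bound becomes the tight equality $2p \le 2p$.
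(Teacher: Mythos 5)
Your proof is correct and takes essentially the same approach as the paper: both check the base case $\frac{p}{p+p/2}=\frac{2}{3}\le\frac{4}{5}$, rewrite the invariant in the linear form $5\,occ \le 4\,len + 2p$, and use the inductive hypothesis to reduce preservation under one loop iteration to a trivial inequality in $l$ and $\alpha$ alone. Your reduced inequality $2p \le 4l + p'$ is, after substituting $p' = 2p - l + \alpha$, exactly the paper's final inequality $0 \le 3l + \alpha$; the only difference is cosmetic (you keep $p'$ symbolic where the paper expands it).
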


    \begin{proof}
      Before the first execution of the \textbf{repeat} loop, $occ=len=p$,
      and consequently $I(p,occ,len)$ holds:
      $$\frac{p}{p+\frac{p}2} \ =\  \frac{1}{\frac{3}2} \ =\ \frac{2}3 \ \le\  \frac{4}{5}\ .$$
      
      Therefore, we only need to prove that if $I(p,occ,len)$
      holds then $I(p',occ',len')$ also holds, where
      $p'$, $occ'$ and $len'$ are the values obtained as a result of
      a single step of the \textbf{repeat} loop, i.e.:
\myskip
\begin{center}
\fbox{
  \medskip
    \begin{minipage}{4cm}
  \medskip
    ~~~$p'=2p-l+\alpha,$

    ~~~$ occ'=occ+2p-l+\alpha,$

    ~~~$  len'=len+2p+\alpha.$
    \medskip
    \end{minipage}
}
\end{center}
      \myskip
      Let us restate $I(p',occ',len')$ equivalently in the following
      way:
      \begin{equation}\label{e:I1}
      5\cdot occ+10p-5l+5\alpha    \ \le \    4\cdot len+8p+4\alpha+4\cdot \frac{2p-l+\alpha}2\ .
      \end{equation}
      On the other hand, $I(p,occ,len)$ can be expressed as
      $$5\cdot occ    \ \le\     4\cdot len+4\cdot \frac{p}2\ .$$
      Hence, in order to show (\ref{e:I1}), it is sufficient to prove that:
      \begin{equation}\label{e:I2}
      10p-5l+5\alpha\     \le  \   8p+4\alpha+2\cdot (2p-l+\alpha)-2p\ .
      \end{equation}
      As a result of some rearrangement, (\ref{e:I2}) can be expressed as
      $$0 \le 3l+\alpha$$
      and this inequality trivially holds.
      \qed
    \end{proof}

    \noindent
    We can now show the upper bound for the number of cubes in independent prefixes.

    \begin{lemma}
      \label{lem:ind_prefix_4_5n}
      Let $v$ be the independent prefix of $u$.
      The number of different nonempty cubes that occur in $u$ and start within $v$
      is not greater than $\frac{4}{5}\cdot|v|$.
    \end{lemma}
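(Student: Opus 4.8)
The plan is to reduce this statement entirely to the Invariant Lemma (Lemma~\ref{l:invariant}) by tracking everything through the sequence $c$ associated with the independent prefix $v$. First I would count: by Lemma~\ref{l:bigPowers} each position of $u$ carries the last occurrence of at most one cube, so the distinct nonempty cubes that occur in $u$ and start within $v$ are in bijection with the nonzero entries of $c$, whereas the whole of $c$ has length $|v|$. Hence the desired inequality is exactly $R(c) \le \tfrac45$, where $R(c)$ denotes the ratio of the number of nonzero entries of $c$ to $|v|$.

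Next I would normalize $c$ to the sequence $c'$ by applying, as long as possible, the two rewriting moves described before the pseudocode: deleting the zeros strictly between two equal positive entries, and padding a run of some value $q$ of length below $q$ up to length exactly $q$. As already observed, neither move decreases the ratio, so $R(c) \le R(c')$; moreover both preserve properties a)--c), and every such $c'$ is an output of Algorithm Abstract-Simulation. It therefore suffices to bound $R(c')$ from above by $\tfrac45$.

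For the final step I would invoke Lemma~\ref{l:invariant}: at the moment the algorithm has emitted $c'$, the invariant $I(p,occ,len)$ holds, i.e.\ $\frac{occ}{\,len+\frac{p}{2}\,} \le \frac45$, where $occ$ is the number of positive entries of $c'$, $p$ the value of its last positive run, and $len$ its length after stripping the $\lceil p/2\rceil$ trailing zeros. The key bookkeeping observation is that $c'$ ends with \emph{exactly} $\lceil p/2\rceil$ zeros, so that $|c'| = len + \lceil p/2\rceil \ge len + \frac{p}{2}$; this holds because the independent prefix terminates in a $q$-occurrence followed by precisely $\lceil q/2\rceil$ empty positions (clause~2 of Definition~\ref{d:independent_prefix}), and neither rewriting move touches these trailing zeros or alters the last positive value. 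Enlarging the denominator then yields $R(c') = \frac{occ}{|c'|} \le \frac{occ}{\,len+\frac{p}{2}\,} \le \frac45$, and combining with $R(c) \le R(c')$ gives the bound on the number of cubes.

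I expect the main obstacle to lie not in the arithmetic but in the bridge between the word-combinatorial object and the abstract model: one must argue carefully that the normalized $c'$ really is a legal run of Algorithm Abstract-Simulation (so that Lemma~\ref{l:invariant} is applicable) and that the trailing-zero block of $c'$ has length exactly $\lceil p/2\rceil$ rather than merely at least that. Once these two structural facts are in hand, the remaining pieces --- monotonicity of $R$ under the two normalization moves and the denominator estimate $\lceil p/2\rceil \ge \tfrac{p}{2}$ --- are routine.
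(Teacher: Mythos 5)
Your proposal is correct and takes essentially the same route as the paper: pass from the cubes to the sequence $c$ (using Lemma~\ref{l:bigPowers} to justify the position-to-cube correspondence), normalize to $c'$ without decreasing the ratio, note that $c'$ is generated by Algorithm Abstract-Simulation, and conclude via Lemma~\ref{l:invariant} together with the estimate $len+\left\lceil\frac{p}{2}\right\rceil \ge len+\frac{p}{2}$. The only point the paper treats separately that you pass over is the trivial case where $v$ is a single letter (clause 1 of Definition~\ref{d:independent_prefix}); there the all-zero sequence is not an output of the algorithm, but the bound holds vacuously since no cube starts within $v$.
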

    \begin{proof}
      Observe that if $v$ satisfies the first condition of
      Definition~\ref{d:independent_prefix}, then the conclusion trivially holds.
      Therefore, from now on we assume that $|v|>1$.

      As described in the previous section, instead of computing
      the ratio of cubes that occur in $u$ and start within $v$, we can
      deal with the ratio $R$ of nonzero elements of the corresponding sequence
      $c$ to the length of $c$ and show that $R \le \frac{4}5$.
      For this it suffices to prove that for any valid sequence $c'$ the ratio of nonzero
      elements does not exceed $\frac{4}{5}$.

      The Abstract-Simulation Algorithm generates every possible sequence $c'$.
      Hence, in order to prove the $\frac{4}5$ bound, we need to show that inequality
      $$\frac{occ}{len+\left\lceil\frac{p}2\right\rceil}\ \le\  \frac{4}5$$
      holds for every possible execution of the Algorithm.
      But this inequality is a consequence of the fact that $I(p,occ,len)$
      is an invariant of the Algorithm (Lemma \ref{l:invariant}).
      \qed
    \end{proof}
    
    \begin{theorem}
      \label{thm:ind_prefix_4_5n}
      The number of different nonempty cubes that occur in a word of length $n$
      is not greater than $\frac{4}{5}n$.
    \end{theorem}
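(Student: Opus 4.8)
The plan is to decompose the word $u$ into a sequence of independent prefixes and apply Lemma~\ref{lem:ind_prefix_4_5n} to each block. The key observation is that the bound $\frac{4}{5}|v|$ already holds for cubes starting within a single independent prefix $v$; I need to lift this from one prefix to the entire word of length $n$.

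First I would set up the decomposition. Starting from position $1$, let $v_1$ be the independent prefix of $u$ (which exists by Lemma~\ref{lem:ind_prefix_ex}). Then I would restrict attention to the suffix of $u$ beginning right after $v_1$, take its independent prefix $v_2$, and repeat. This yields a partition $u = v_1 v_2 \cdots v_m$ into consecutive blocks, where each $v_t$ is the independent prefix of the corresponding suffix $u[|v_1\cdots v_{t-1}|+1 \dotdot n]$. Since each block has positive length and they tile $u$, we have $\sum_{t=1}^m |v_t| = n$.

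The crucial point — and the step I expect to be the main obstacle — is verifying that every (last) occurrence of a cube in $u$ is counted in exactly one block, so that summing the per-block bounds is legitimate. Each cube occurrence starts at some position of $u$, which lies in a unique block $v_t$; I must argue that Lemma~\ref{lem:ind_prefix_4_5n}, applied to the suffix defining $v_t$, correctly accounts for this occurrence. Here one must be careful that ``occurrence within $v_t$'' is measured relative to the suffix whose independent prefix is $v_t$, and that the last-occurrence semantics are consistent across this shift; since an occurrence that is last in $u$ is in particular last in any suffix containing it, the counting per block is an overcount-safe upper bound, so no cube is missed and none is double-counted.

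Granting this, the conclusion is immediate: the total number of distinct nonempty cubes in $u$ is at most
\begin{equation*}
\sum_{t=1}^m \frac{4}{5}|v_t| \;=\; \frac{4}{5}\sum_{t=1}^m |v_t| \;=\; \frac{4}{5}\,n,
\end{equation*}
which is exactly the asserted bound. Thus the theorem follows by combining the existence of the independent-prefix decomposition with the single-block estimate, and the only delicate part is the bookkeeping that guarantees a clean partition of all cube occurrences across the blocks.
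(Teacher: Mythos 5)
Your proof is correct and takes essentially the same approach as the paper: the paper argues by induction on $n$, peeling off one independent prefix per step via Lemmas~\ref{lem:ind_prefix_ex} and~\ref{lem:ind_prefix_4_5n}, which is exactly your decomposition $u=v_1v_2\cdots v_m$ with the induction unrolled into a sum. Your bookkeeping about last occurrences (a cube whose last occurrence in $u$ starts in block $v_t$ is also last in the suffix beginning at $v_t$, hence counted there and in no later block) is the same point the paper handles by splitting $u=vw$ into cubes starting within $v$ and cubes occurring totally inside $w$.
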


    \begin{proof}
      We prove the theorem by induction on $n$.
      The basis ($n=0$) is trivial.
      Now assume that the conclusion holds for all words of length not
      exceeding $n$ and consider a word $u$ of length $n+1$.
      Due to Lemma \ref{lem:ind_prefix_ex}, there exists the independent prefix
      $v$ of $u$, $v \ne \varepsilon$, $u=vw$.
      The cubes occurring within $u$ can be divided into two groups: the
      ones that start within $v$ and the ones that occur totally inside $w$.
      By Lemma \ref{lem:ind_prefix_4_5n}, the number of cubes in the first
      group does not exceed $\frac{4}5 |v|$, and by the inductive hypothesis,
      $\cubes(w) \le \frac{4}5\cdot|w|$.
      In total, there are at most
      $$\frac{4}5\cdot|v| + \frac{4}5\cdot|w| \le \frac{4}5\cdot|u|$$
      cubes within $u$ --- this ends the inductive proof.
      \qed
    \end{proof}

\section{Lower bound on the number of cubic subwords}

      A trivial lower bound on the number of different cubic subwords
      is the word $a^n$ with $\left\lfloor \frac{n}3\right\rfloor$ cubic occurrences.
      The table presented in Figure \ref{fig:cubes-example} contains examples
      of some words with higher number of cubic subwords.
      These words have been computed using extensive computer experiments.

    \begin{figure}[h!p]
      \begin{tabular}{r|p{0.1cm}p{8.5cm}|r|r}
        $n$ & & word & $\#$cubes & ratio \\ \hline
        20 & & {\footnotesize \texttt{01110101011011011000}} & 7 & 0.35 \\ \hline
        30 & & {\footnotesize \texttt{000000110110110101101011010101}} & 11 & 0.36 \\ \hline
        40 & & {\footnotesize \texttt{1101101101110111011100010001000100100100}} & 16 & 0.40 \\ \hline
        50 & & {\footnotesize \texttt{11111111110010010010100101001010100101010010101000}}
        & 20 & 0.40 \\ \hline
        60 & & {\footnotesize \texttt{10100101001010010101001010010101001010010101001010
            1001010100}} & 25 & 0.41 \\ \hline

        70 & & {\footnotesize \texttt{00000011011011010110101101010110101101010110101101
            01011010101101010111}} & 30 & 0.42 \\ \hline

        80 & & {\footnotesize \texttt{11011011010110110101101101011010110101011010110101
            011010110101011010101101010111}} & 34 & 0.42 \\ \hline

        90 & & {\footnotesize \texttt{11101101101110110110111011011011101101110110110111
            0110111011011011101101110110111011101110}} & 40 & 0.44 \\ \hline

        100 & & {\footnotesize \texttt{10001010100101010010101001010010101001010010101001
            01001010010101001010010100101010010100101001010111}}
        & 44 & 0.44\\
      \end{tabular}
      \caption{Examples of words with high number of distinct cubic subwords.}
      \label{fig:cubes-example}
    \end{figure}

    Let us proceed to the construction of the $\frac{1}2n$ lower bound.
    For $i \ge 1$, let $p_i$ be the word $0^i10^{i+1}1$.
    Let $q_n$ be the concatenation $p_1p_2 \ldots p_n$.
    Thus, for instance, $q_4=\mathtt{01001001000100010000100001000001}$.

    \begin{lemma}\label{l:lenqn}
      The length of $q_n$ is $n^2+4n$.
    \end{lemma}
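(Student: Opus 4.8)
The plan is to compute the length directly from the block decomposition. Since $q_n$ is the concatenation $p_1 p_2 \ldots p_n$, its length is simply the sum of the lengths of the individual blocks $p_i$, so the first step is to determine $|p_i|$.

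First I would count the length of a single block. Writing $p_i = 0^i 1 0^{i+1} 1$, the four pieces have lengths $i$, $1$, $i+1$, and $1$ respectively, so that $|p_i| = i + 1 + (i+1) + 1 = 2i + 3$.

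Then I would sum these lengths over $i = 1, \ldots, n$, using the standard formula $\sum_{i=1}^n i = \frac{n(n+1)}{2}$:
$$|q_n| \ =\ \sum_{i=1}^{n} |p_i| \ =\ \sum_{i=1}^{n} (2i + 3) \ =\ 2 \cdot \frac{n(n+1)}{2} + 3n \ =\ n^2 + n + 3n \ =\ n^2 + 4n\ .$$

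There is no real obstacle here; the statement is a routine arithmetic identity and the whole argument is a one-line summation. The only point requiring a little care is correctly counting the length of each $p_i$ --- in particular noting that the exponent of the second run of zeros is $i+1$ rather than $i$, which contributes the constant $3$ (instead of $2$) per block and hence the $4n$ term after summation. As a sanity check, the example $q_4$ given in the text has length $4^2 + 4 \cdot 4 = 32$, which matches the displayed string.
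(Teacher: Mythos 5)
Your proof is correct and follows exactly the paper's argument: each block satisfies $|p_i| = 2i+3$, and summing over $i=1,\ldots,n$ gives $n^2+4n$. The sanity check against the displayed example $q_4$ is a nice touch but otherwise the two proofs are identical.
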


    \begin{proof}
      Clearly $p_i$ contains $2i+3$ bits, so
      $$|q_n| \ =\  \sum_{i=1}^n 2i+3 \ =\  n^2+4n\ .$$
      \qed
    \end{proof}

    \begin{lemma}\label{l:cubesqn}
      The word $q_n$ contains exactly
      $$\frac{n^2}2 + \frac{n}2 - 1 + \left\lfloor\frac{n+1}3\right\rfloor$$
      distinct cubes.
    \end{lemma}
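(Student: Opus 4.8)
The plan is to exploit the very rigid block structure of $q_n$. Writing $q_n$ as maximal runs of zeros separated by single $1$'s, the sequence of run lengths is $1,2,2,3,3,4,4,\ldots,n,n,n+1$ (the $i$-th factor $p_i$ contributes the two runs $i$ and $i+1$). I would first record three structural facts about this run-length sequence $(b_k)$: it is non-decreasing; no three consecutive runs are equal (the only coincidences are the adjacent pairs $b_{2i}=b_{2i+1}=i+1$); and for each $m$ with $2\le m\le n$ the value $m$ occurs as an adjacent equal pair in exactly one place, whose left neighbour has length $m-1$ and whose right neighbour has length $m+1$ (the values $1$ and $n+1$ each occur only once and form no equal pair).

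Next I would reduce counting distinct cubes to counting their roots: since $w^3=w'^3$ forces $w=w'$, a cube word is determined by its root (the length $|C|/3$ factor), so it suffices to count the words $w$ with $w^3$ occurring in $q_n$. I classify these $w$ by the number $r$ of $1$'s they contain. The key observation is that each run of $w^3$ lying strictly between two consecutive $1$'s of $w^3$ is a full run of $q_n$, so these internal runs form a contiguous, hence non-decreasing, chunk of $(b_k)$. Writing $w=0^{a_0}10^{a_1}1\cdots 10^{a_r}$, the internal runs read $a_1,\ldots,a_{r-1},a_0+a_r,a_1,\ldots$ and return to the value $a_1$; for $r\ge 2$ the non-decreasing property forces them all equal, producing at least $3r-1\ge 5$ equal consecutive runs and contradicting the no-three-equal fact. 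Hence only $r=0$ and $r=1$ occur, and I expect this exclusion of $r\ge2$ to be the crux of the argument.

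It then remains to count the two surviving families. For $r=0$ the cube is $0^{3k}$, which occurs iff some run has length $\ge 3k$, i.e.\ iff $3k\le n+1$, giving $\lfloor (n+1)/3\rfloor$ cubes. For $r=1$ the cube has the form $0^{a_0}10^m10^m10^{a_1}$ with $m=a_0+a_1$; its two internal runs force an adjacent equal pair of value $m$, so $2\le m\le n$, and at the unique such location the surrounding runs of lengths $m-1$ and $m+1$ pin the admissible partial runs to $a_0\le m-1$ and $a_1\le m+1$. With $a_0+a_1=m$ this leaves exactly $a_0\in\{0,1,\ldots,m-1\}$, i.e.\ $m$ distinct cubes per $m$, summing to $\sum_{m=2}^n m=\frac{n(n+1)}2-1$. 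Adding the all-zero family yields $\frac{n^2}2+\frac n2-1+\lfloor (n+1)/3\rfloor$, as claimed. The only remaining care is bookkeeping: verifying the boundary subcases $a_0=0$ and $a_1=0$ (the latter being infeasible, since $0^m$ cannot fit the length-$(m-1)$ left neighbour) and confirming that distinct pairs $(a_0,m)$ together with the powers $0^{3k}$ give pairwise distinct cube words, which is immediate since the root is recoverable from the cube.
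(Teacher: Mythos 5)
Your proof is correct, and although it follows the same skeleton as the paper's --- classify the cube roots $w$ by the number $r$ of $1$'s they contain, rule out $r\ge 2$, and count the families $r=0$ and $r=1$ --- both nontrivial steps are carried out by genuinely different arguments. For the exclusion of $r\ge 2$, the paper notes that such a root contains a factor $10^i1$ (for some $i\ge 1$), which occurs at least three times in $w^3$ but at most twice in $q_n$; you instead use monotonicity of the run-length sequence $1,2,2,3,3,\ldots,n,n,n+1$ to force all $3r-1$ internal runs of $w^3$ to be equal, contradicting the fact that no value occurs three times consecutively. For the enumeration, the paper exhibits the $i+1$ cubes inside each $p_ip_{i+1}$ plus the powers $0^{3k}$, and then merely asserts completeness (``Notice that we have considered all cubes $u^3$ for which the number of 1's in $u$ equals 0 or 1''); you derive the admissible roots $0^{a_0}10^{a_1}$ directly from the unique adjacent equal pair of runs $m,m$ and its neighbours of lengths $m-1$ and $m+1$, obtaining existence and completeness simultaneously --- in particular your argument makes explicit why $a_1=0$ (roots $0^m1$) and $m\notin\{2,\ldots,n\}$ are impossible, points the paper leaves to the reader. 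The trade-off: the paper's factor-counting exclusion is shorter and slicker, while your run-length analysis costs more setup but makes the upper-bound half of the lemma fully rigorous. One small detail you should state explicitly: an internal run of $w^3$ could be empty (i.e.\ $w^3$ contains the factor $11$), which cannot be matched in $q_n$ since every $1$ there is followed by a $0$ or ends the word; this disposes of roots such as $w=1$ and justifies restricting to $m\ge 2$ in your $r=1$ count and to nonzero internal runs in your $r\ge 2$ argument.
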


    \begin{proof}
      Note that the concatenation $p_ip_{i+1}=0^i10^{i+1}10^{i+1}10^{i+2}1$
      contains the following $i+1$ cubes:
      $$\bigl(0^i10\bigr)^3,\ \bigl(0^{i-1}10^2\bigr)^3,\ \ldots,\ \bigl(010^i\bigr)^3,\ \bigl(10^{i+1}\bigr)^3\ .$$
      Apart from that, in $q_n$ there are $\left\lfloor\frac{n+1}3\right\rfloor$ cubes
      of the form $0^3,0^6,0^9,\ldots$
      Thus far we obtained
      $$\sum_{i=1}^{n-1} (i+1) + \left\lfloor\frac{n+1}3\right\rfloor \ =\ 
        \frac{n^2}2 + \frac{n}2 - 1 + \left\lfloor\frac{n+1}3\right\rfloor$$
      cubes.

      \begin{figure}[h]
      \centering
      \includegraphics[width=.4\textwidth]{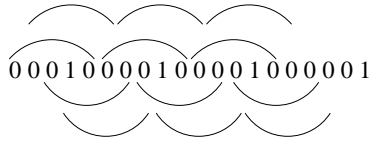}
      \caption{For $i=3$ the word $p_i p_{i+1}$ contains 4 cubes of length $3i+6=15$.}
      \end{figure}

      It remains to show that there are no more cubes in $q_n$.
      Notice that we have considered all cubes $u^3$ for which the number
      of 1's in $u$ equals 0 or 1.
      On the other hand, if this number exceeds 1 then $u$ would contain
      the factor $10^i1$ for some $i \ge 1$ and this is impossible, since
      for a given $i$ such a factor appears within $q_n$ at most twice.
      \qed
    \end{proof}

    \begin{theorem}
      For infinitely many positive integers $m$ there exists a word of length $m$
      for which the number of cubes is $\frac{m}2-o(m)$.
    \end{theorem}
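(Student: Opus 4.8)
The plan is to use the explicit family of words $q_n$ already constructed, together with the two preceding lemmas, and simply verify the asymptotics. Concretely, I would take the infinite family of lengths $m = m_n := |q_n|$ for $n = 1, 2, 3, \ldots$; by Lemma~\ref{l:lenqn} these are exactly the integers of the form $n^2 + 4n$, and distinct values of $n$ give distinct $m$, so this is an infinite set. For each such $m$ the witness word is $q_n$ itself. Everything then reduces to comparing the cube count from Lemma~\ref{l:cubesqn} against $\frac{m}{2}$.

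Next I would carry out the comparison. Writing $C(n)$ for the number of cubes, Lemma~\ref{l:cubesqn} gives $C(n) = \frac{n^2}{2} + \frac{n}{2} - 1 + \lfloor\frac{n+1}{3}\rfloor$, while $\frac{m}{2} = \frac{n^2}{2} + 2n$. Subtracting, the quadratic terms cancel and one is left with
$$\frac{m}{2} - C(n) \;=\; \frac{3n}{2} + 1 - \left\lfloor\frac{n+1}{3}\right\rfloor.$$
Bounding the floor between $\frac{n-2}{3}$ and $\frac{n+1}{3}$ shows that this difference lies in the range $\bigl[\frac{7n}{6}+\frac{2}{3},\,\frac{7n}{6}+\frac{5}{3}\bigr]$, so the gap is $\Theta(n)$; in particular it is linear in $n$ and the quadratic main terms of the cube count and of $\frac{m}{2}$ agree.

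Finally I would translate this back into a statement about $m$. Since $m = n^2 + 4n > n^2$ we have $n < \sqrt{m}$, hence the gap satisfies $\frac{m}{2} - C(n) = O(n) = O(\sqrt{m}) = o(m)$, which is exactly the claim $C(n) = \frac{m}{2} - o(m)$. There is no genuine obstacle here beyond careful bookkeeping: the only thing to watch is that the $\Theta(n)$ linear-in-$n$ slack (coming jointly from the $\frac{n}{2}-1+\lfloor\frac{n+1}{3}\rfloor$ lower-order terms of the cube count and the $2n$ term in $\frac{m}{2}$) is genuinely sublinear once re-expressed in $m$, i.e.\ $\sqrt{m}$-sized rather than $m$-sized. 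This is immediate from $n \le \sqrt{m}$, so the estimate $\frac{m}{2} - O(\sqrt{m})$ follows and the proof is complete.
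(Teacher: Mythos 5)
Your proof is correct and follows essentially the same route as the paper: both use the family $q_n$ with Lemmas~\ref{l:lenqn} and \ref{l:cubesqn}, compute the gap $\frac{m}{2}-\cubes(q_n)=\frac{3n}{2}+1-\left\lfloor\frac{n+1}{3}\right\rfloor=O(n)$, and conclude that this is $o(m)$ since $m$ grows quadratically in $n$. Your only addition is making the conversion $O(n)=O(\sqrt{m})=o(m)$ explicit, which the paper leaves implicit.
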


    \begin{proof}
      Due to Lemmas \ref{l:lenqn} and \ref{l:cubesqn}, for any word $q_n$ we have:
      \begin{eqnarray*}
        \frac{|q_n|}2 - \cubes(q_n) \ =\ 
        \frac{n^2}2 + 2n - \frac{n^2}2 - \frac{n}2 + 1 - \left\lfloor\frac{n+1}3\right\rfloor & = & \\
        \frac{3}2 n - \left\lfloor\frac{n+1}3\right\rfloor + 1 \ =\ O(n) & = & o(|q_n|)\ .
      \end{eqnarray*}
      Thus, $\cubes(q_n) = \frac{|q_n|}2 - o(|q_n|)$.
      \qed
    \end{proof}

    Interestingly, the example from the paper \cite{fraenkel-simpson} of a family
    of words that contain $m-o(m)$ squares is quite similar to our example, but
    instead of $p_i$ it utilizes words of the form $p'_i = 0^{i+1}10^i10^{i+1}1$.

  \section{Conclusions}
  In this paper we prove a tight bound for the number of nonprimitive squares
  in a word of length $n$.
  Unfortunately, this does not improve the overall bound of the number of
  squares --- the main open problem is improving the bound for primitive
  squares.

  We also give some estimations of the number of cubes in a string of length
  $n$.
  These bounds are much better than the best known estimations for squares in general.
  We believe that at least the upper bound established in our paper is not tight.

\bibliographystyle{plain}
  \bibliography{cubes}

\end{document}